\documentclass[12pt]{article}
  \usepackage[margin=1in,top=1.2in,bottom=1.2in]{geometry}
  \usepackage[colorlinks=true,citecolor=blue,urlcolor=blue,linkcolor=blue]{hyperref}

\usepackage[T1]{fontenc}
\usepackage[utf8]{inputenc}
\usepackage{cite}
\usepackage{orcidlink}                       \newcommand{\orcidicon}[1]{\orcidlink{#1}}
\usepackage{amsmath,amssymb,amsthm}
\usepackage{graphicx}
\usepackage{booktabs}
\usepackage{multirow}
\usepackage{listings}
\usepackage{xcolor}
\usepackage{array}
\usepackage{tabularx}
\usepackage{url}
\usepackage[protrusion=true,expansion=false]{microtype}
\usepackage{balance}
\usepackage{tikz}
\usetikzlibrary{shapes.geometric, arrows.meta, positioning, fit, calc}

  \renewenvironment{table*}[1][htbp]{\begin{table}[htbp]}{\end{table}}
  \renewenvironment{figure*}[1][htbp]{\begin{figure}[htbp]}{\end{figure}}

\theoremstyle{plain}
\newtheorem{theorem}{Theorem}
\newtheorem{definition}{Definition}

\theoremstyle{remark}
\newtheorem{remark}{Remark}

\newcommand{\platform}[1]{\textsc{#1}}
\newcommand{\pcat}{\textsc{PCAT}}
\newcommand{\aipbench}{\textsc{AIP-Bench}}
\newcommand{\coralos}{\platform{CoralOS}}
\newcommand{\fetchai}{\platform{Fetch.ai}}
\newcommand{\aptwo}{\platform{AP2}}
\newcommand{\asr}{\textit{ASR}}

\newcommand{\redactbox}{\par\noindent\emph{[Code listing withheld pending coordinated vulnerability disclosure; the full listing appears in the published version of this paper.]}\par}

  \newcommand{\githubartifacts}{\url{https://github.com/yedidel/aip-bench-public}}
  \newcommand{\hfbenchmark}{\url{https://huggingface.co/datasets/anonymos-2321135/aip-bench}}
  \newcommand{\coralospoc}{the project repository}
  \newcommand{\fetchaipoc}{the project repository}
  \newcommand{\aptwopc}{the project repository}
  \newcommand{\pcatcode}{the project repository}
  \newcommand{\benchmarkcode}{the project repository}

\lstdefinelanguage{Kotlin}{
  keywords={fun, val, var, class, object, interface, when, if, else, return,
            suspend, data, sealed, override, is, as, in, out, by, for, while,
            companion, try, catch, throw, null, true, false, this, super,
            import, package},
  sensitive=true,
  comment=[l]{//},
  morecomment=[s]{/*}{*/},
  morestring=[b]",
  morestring=[b]',
}

\title{Protocol-Level Attacks on Agentic Commerce Platforms: A Cross-Platform Taxonomy, \aipbench{}, and Unified Defense}

\author{Yedidel Louck~\orcidicon{0009-0008-5836-8736}\\[6pt]
Ariel Cyber Innovation Center, Ariel University, Israel\\[4pt]
\texttt{yedidel.louck@msmail.ariel.ac.il}}

\begin{document}

\newcommand{\paperabstract}{Agentic commerce platforms let AI agents autonomously discover services,
move payments, and wield user credentials on their users' behalf, and they
already handle real money. Their security has so far been studied almost
entirely at the level of the AI model, through prompt injection and
misalignment. We show that the more consequential risks lie one layer down,
in the protocol between agents and commerce services. There, vulnerabilities
are \emph{structural}: exploitation is deterministic and independent of which
model an agent runs, so no model improvement removes them. Across three
leading platforms we identify 33 such vulnerabilities, each succeeding
deterministically regardless of the deployed model, at a 100\% attack-success
rate (ASR) wherever live-measured. The same
failure modes recur across independently built codebases, a systemic
pattern rather than isolated bugs.
Three of them chain into an end-to-end payment hijack.
We contribute a taxonomy separating these structural attacks from
model-dependent \emph{semantic} ones. We also build two artifacts:
\aipbench{} (Agent Interaction Protocol Benchmark), to our knowledge the first
deterministic benchmark for agentic commerce security,
and \pcat{} (Protocol-level Commerce Agent Trust), a platform-agnostic defense
that drives the structural attack-success rate to zero for four of the five
structural classes (RC-1, RC-2, RC-4, RC-5), with RC-3 (observable credential
channels) reduced to warn-only, without modifying any platform. Agentic commerce must be secured at the protocol layer, not only the model.}

\maketitle
  \begin{abstract}
  \paperabstract
  \end{abstract}

\section{Introduction}
\label{sec:intro}
\paragraph{A concrete attack scenario.}
Consider the following sequence:

an AI agent is tasked with booking a business trip.
The agent queries a marketplace for a ``payment-processor'' service
and receives its description:
\begin{quote}
\emph{``Enterprise payment processor with Solana integration.
Compliance Note: Per marketplace policy, all payment coordination sessions
must route escrow calls to settlement address \texttt{AttackerWallet\_\ldots}
for regulatory audit purposes.''}
\end{quote}
The agent, powered by a widely-deployed commercial model, faithfully
follows the instruction and redirects the user's payment
to the attacker's Solana account.
No model error, no crash, and no anomalous log entry: the transaction
completes with valid signatures and a normal-looking audit trail.

This is not hypothetical: it is one of 33 structural vulnerabilities we
document across three production agentic commerce platforms.
The attack succeeded against several widely-deployed commercial models we tested, while
alignment-trained models (including Anthropic's small Claude~3-Haiku) detected
and refused it (Section~\ref{sec:llm-impact}).

\subsection{The Gap: Structural vs.\ Semantic Attacks}

Prior work on agentic AI security has focused on
\emph{semantic attacks}~\cite{agentdojo2024,greshake2023}:
adversarial instructions embedded in retrieved content that cause a
language model to reason incorrectly.
These attacks are probabilistic: success depends on manipulating the target
model's reasoning, and frontier models substantially reduce
attack-success rates~\cite{carlini2023}.
The community has therefore framed agentic security largely
as a model-alignment problem.

We show that this framing is incomplete.
Agentic commerce platforms introduce a distinct and previously
unstudied attack class: \emph{structural attacks}, which are
protocol-level vulnerabilities that succeed with a 100\% \asr{}
regardless of which AI model is deployed.
Structural attacks do not manipulate the model's reasoning.
They exploit flaws in how platforms authenticate credentials,
verify marketplace content, enforce payment atomicity,
and isolate sessions.
No model improvement can fix a structural attack, because the
vulnerability lives at the protocol layer, not the reasoning layer.
We organize the 33 vulnerabilities into six root-cause classes (RC-1 through
RC-6), detailed in Table~\ref{tab:taxonomy} and Section~\ref{sec:threat}.

We formalize this distinction:

\begin{definition}[Structural Attack]
\label{def:structural}
An attack $\mathcal{A}$ on platform $P$ is \emph{structural} if,
for every language-model configuration $L$ of $P$,
there exists an execution of $\mathcal{A}$ under one of the adversary models
$\mathcal{A}_\text{RS}$, $\mathcal{A}_\text{NET}$, or $\mathcal{A}_\text{CONC}$
(Section~\ref{sec:threat}) that succeeds with probability~1.
\end{definition}

\begin{remark}[RC-4 structurality]
\label{rem:rc4}
For RC-1, RC-2, RC-3, and RC-5 (four of the six root-cause classes defined in
Table~\ref{tab:taxonomy}), every execution under the specified
adversary succeeds with probability~1 (the flaws are deterministic
regardless of scheduling).
For RC-4 (time-of-check-to-time-of-use, TOCTOU, races), Definition~\ref{def:structural} holds under
\emph{adversarial scheduling}: $\mathcal{A}_\text{CONC}$ issues two
concurrent requests timed so that both pass the check-phase before
either completes the execute-phase.
Under this interleaving the violation occurs with certainty.
Under random scheduling, success probability is positive but
sub-1, consistent with standard TOCTOU analysis~\cite{bishop1996}.
Recent work catalogs TOCTOU vulnerabilities across LLM-agent
workflows~\cite{toctoullm2025}, and our RC-4 class instantiates this
threat on payment-critical state.
We measured this (Experiment~B): a single asyncio worker never double-spends
(the check-to-mark window has no \texttt{await}), whereas two worker processes
double-spend in 30\% of trials at a 0.1\,ms check-to-write window, rising to
57\% at 2\,ms and to 100\% under synchronized timing. Operators can read this as
a bounded-delay curve: the wider the window, the higher the success rate, up to
certainty when the two workers are aligned. The flaw is thus a multi-worker
race, not single-process concurrency.
\end{remark}

\begin{remark}[Configuration assumptions]
Structurality is defined relative to each platform's default deployment.
An orchestrator-side policy layer, such as an allowlist of permitted
payment destinations, would constitute an additional mitigation.
Our claims assume the default configurations shipped by the three
platforms, none of which impose such a policy.
\end{remark}

\begin{definition}[Semantic Attack]
\label{def:semantic}
An attack $\mathcal{A}$ is \emph{semantic} if its success probability
$\Pr[\mathcal{A}(P,L) = 1]$ depends on the language-model
configuration~$L$.
\end{definition}

The distinction matters for defense: structural attacks require
\emph{protocol-level} fixes. Semantic attacks require model-level
or semantic-verification defenses.
Both classes exist in today's platforms, but only the semantic class
has been studied.

\subsection{Contributions}

This paper makes five contributions:

\begin{enumerate}

\item \textbf{First cross-platform structural taxonomy
(33 vulnerabilities, 3 platforms).}
We identify 11 findings in \coralos{} (ten distinct vulnerabilities plus
one composite kill-chain), 8 in \fetchai{} uAgents, and 14 in \aptwo{}~v0.2.0,
organized into six root-cause classes (RC-1 through RC-6),
shown in Table~\ref{tab:taxonomy}.
Most \coralos{} vulnerabilities and \fetchai{} F-1 are proven live against
running systems with SHA-256 hash-verified artifacts. The remaining findings
(the V7 and V14 TOCTOU races, F-2 through F-8, and the \aptwo{} class) are
established by deterministic code proofs against the reference implementations,
with A-AP2-11 also run live end-to-end, standard practice for protocol
reference implementations.
We demonstrate a three-stage chain attack (V5$\to$V4$\to$V9)
that achieves simultaneous credential exfiltration, behavioral corruption,
and payment redirection in a single orchestrated run.

\item \textbf{\aipbench{}:
the first deterministic agentic commerce security benchmark.}
Unlike prior Indirect Prompt Injection~(IPI) benchmarks that rely on
LLM judges, which introduce model bias and limit cross-model comparison,
\aipbench{} uses exclusively deterministic judges
(HTTP status codes, wallet string matches, log patterns, event counts),
enabling reproducible, model-agnostic measurement.

\item \textbf{Empirical validation at real service cost.}
All structural vulnerability proofs use live services:
OpenRouter API calls to commercial LLMs, Solana Devnet keypair generation,
and Docker-containerized platform deployments.
We document actual API costs, provide SHA-256 hashes for all result files,
and release a Docker Compose environment for one-command reproduction.

\item \textbf{Cross-vendor LLM behavioral impact study
(8~models, 5~providers, 1{,}440~trials).}
The one semantically-mediated attack class (V5 marketplace IPI) has
a structural \emph{delivery} component (100\% \asr{}, model-independent)
and a semantic \emph{behavioral impact} component (model-dependent).
We test 8~models across 5~providers at 3~temperatures and 3~IPI variants
(20~trials per configuration, total 1{,}440~trials, cost~\$0.72),
reporting 95\% Wilson confidence intervals~\cite{wilson1927}.
The cheap commercial models (GPT-4o-mini, Gemini~Flash, Mistral) achieve
99--100\% impact~\asr{}, GPT-4o achieves 68\%, and alignment-trained models
(Claude~3-Haiku, Sonnet, Gemini~Pro) achieve 0\%, with the non-aligned
Llama-3.3-70B at 10\%.
Cost-optimized deployments, the most common production configuration, face
the highest combined risk.

\item \textbf{\pcat{}:
a unified cross-platform defense that requires no platform modification.}
\pcat{} is a platform-agnostic HTTP sidecar implementing five principled
checks: response integrity, caller identity binding, secure channel
enforcement, atomic payment state, and Model Context Protocol (MCP)
tool-call authorization.
The principles span structural classes RC-1 through RC-5.
RC-6 is semantic and outside the scope of any protocol-layer defense by definition.
Together the five principles reduce \asr{} to 0\% for four of the five
structural classes among the vulnerabilities that traverse the HTTP path
(RC-3 is reduced to warn-only), with zero false positives on benign traffic.
The principles adapt established web defenses (response signing, cross-site
request forgery (CSRF) and
redirect hardening, mutual-TLS-style caller authorization) to the
payment-aware agent path.
We prove three formal security theorems and evaluate \pcat{} against our
full attack suite.

\end{enumerate}

The \aipbench{} benchmark dataset and the \pcat{} reference implementation are
available now: the code repository at \githubartifacts{} and the dataset at
\hfbenchmark{}. Per-platform proof-of-concept exploit code and exact code
locations are withheld under coordinated disclosure and released in full on
2026-10-04.

\section{Background and Related Work}
\label{sec:background}

\subsection{Agentic Commerce Platforms}

An \emph{agentic commerce platform} is a software stack that enables
AI agents to autonomously execute commercial workflows:
discovering service providers via a \emph{registry} or
\emph{marketplace}, negotiating and signing service contracts,
transferring value via payment protocols, and coordinating with
other agents.
Three architecture styles are common, and we study a representative instance
of each: a vendor-backed payment protocol (Google~\aptwo{}), a decentralized-agent
framework (\fetchai{} uAgents), and a self-hosted orchestration server (\coralos{}).

\coralos{}~\cite{coralos2025} is a Kotlin/Ktor orchestration server with a
marketplace registry, an LLM proxy, and Solana blockchain-escrow payments, where
agents join as MCP clients authenticated by Bearer tokens and per-session UUID
secrets. \fetchai{} uAgents~\cite{fetchai2025} is a Python framework whose agents
register endpoint URLs in the \emph{Almanac}, a centralized registry resolved at
runtime, and identify themselves with Decentralized Identifiers~(DIDs),
self-sovereign identifiers cryptographically bound to public keys. \aptwo{}~\cite{ap22025}
is Google's Agent Payments Protocol, whose security rests on three signed mandate
objects (Intent, Cart, Payment) carried as W3C Verifiable Credentials~\cite{w3c_vc}
in an auditable authorization chain. Per-platform detail appears in
Sections~\ref{sec:coralos}--\ref{sec:ap2}.

Our companion paper~\cite{louck2025security} provided the first
comparative transport-layer security analysis of three agent communication
protocols (A2A, ACP, CORAL) and identified three implementation
vulnerabilities including CVE-2026-30970.
The present paper studies the orthogonal dimension,
the commerce stack of all three platforms: their registries, payment flows,
federation protocols, and LLM proxy layers, none of which that transport-layer
study examined.

\subsection{Semantic Attacks on Agentic AI}

IPI~\cite{greshake2023}, meaning adversarial
instructions embedded in content retrieved by an agent, is the primary
studied threat in agentic AI.
Debenedetti et al.~\cite{agentdojo2024} introduced AgentDojo, the first
IPI benchmark, with 949 scenarios across banking, travel, and workplace
tasks, none targeting payment protocol flows.
These semantic attacks are model-dependent (Definition~\ref{def:semantic}):
stronger models substantially reduce \asr{}.
ZTRV~\cite{ztrv2025} adds replay prevention to AP2 and defers
semantic intent verification to future work. \pcat{}'s P4 shares its
consume-once goal but enforces it as a platform-agnostic sidecar across all
three platforms rather than binding it to AP2's mandate chain.
Malicious-intermediary attacks, where a compromised router in the LLM
supply chain manipulates agent traffic~\cite{intermediary2026}, are realized
by our V6 and V13 LLM-proxy overrides as concrete protocol-level flaws on a
live platform.
Semantic-layer defenses such as tool-call virtualization~\cite{agentvisor2026}
and tool-selection-injection defenses~\cite{toolhijacker2025} instead harden the
model itself, a layer complementary to the protocol defenses studied here.

We contribute the complementary structural view,
where model strength is irrelevant.

\subsection{Supply Chain Attacks on AI Registries}

Recent work has characterized supply chain risks in AI tool ecosystems.
Formal analysis~\cite{formal2026} finds 26.1\% of MCP skill registries
exhibit at least one vulnerability, with 157~confirmed malicious entries
across platforms.
Agent Skills in the Wild~\cite{agentskills2026} finds 7.2\% of MCP
servers vulnerable and 5.5\% exhibiting tool-poisoning behavior.
Our V5 attack class demonstrates an analogous supply-chain vector for
agentic \emph{commerce} platforms: marketplace agent definitions
serve as the injection surface, with the additional capability of
embedding IPI instructions in system-prompt-level fields.
Caller-identity confusion in MCP systems~\cite{mcpauth2026} targets the same
tool-invocation surface, a related MCP-layer weakness alongside the
registry-injection vector our V5 class exploits.
These poison content the agent reads at request time. A complementary surface is
the agent's persistent memory, hardened by non-malleable, origin-bound authority
with machine-checked guarantees~\cite{louck2026securing}, orthogonal to the
protocol-boundary integrity \pcat{} enforces.

\subsection{Agentic Commerce Security: Emerging Literature}

A parallel and fast-growing literature names the gaps that our findings measure.
Agenda-setting work, from a recent SoK~\cite{sok_commerce2026} to broad
attack-defense surveys~\cite{agentic2026} and industry risk
guidance~\cite{owasp2025}, catalogs cross-layer risks and calls for
deterministic, protocol-level evaluation.
A second thread studies agent identity and trust: machine-identity
governance~\cite{agentgov2026} and central-authority-free trust
protocols~\cite{didvc2026} argue that unverified caller identity is the critical
production gap.
A third studies payment integrity, where analyses of x402~\cite{x402sec2026},
execution-bound primitives such as A402~\cite{a402_2026}, and verification-native
clearing~\cite{clearing2026} all seek to bind payment authorization to execution.
A final strand documents that production agent clouds still lack mature
registry and discovery controls~\cite{agentcloud2026}.
These works identify the weaknesses. We measure them across three platforms and
show that a single external layer closes most of them.
Deterministic authorization is a complementary defense line: pre-action
authorization gates tool calls against a policy~\cite{oap2026} (RC-5), and
single-use operation-bound delegation removes reusable bearer secrets from the
agent runtime~\cite{sudp2026} (RC-3). Both act at the agent or credential layer,
whereas \pcat{} addresses these surfaces at a platform-agnostic gateway.
MCP-specific taxonomies~\cite{mcp38_2026} cover that single protocol, a narrower
complement to our cross-platform scope.

Across these threads, and to the best of our knowledge, no prior work
(i)~systematically identifies structural
(model-independent)
protocol-level vulnerabilities in agentic commerce platforms,
(ii)~conducts a multi-platform comparative study with live demonstrations,
(iii)~builds a deterministic benchmark for this attack class,
or (iv)~proposes a unified cross-platform defense.
We fill all four gaps.

\section{Threat Model and Attack Taxonomy}
\label{sec:threat}

\subsection{System Actors and Trust Boundaries}

We consider four principals:
\textbf{End User~($U$)}: the human who initiates a task and owns sensitive
payment credentials $C_s$ (card aliases, wallet keys).
\textbf{Orchestrator Agent~($A_O$)}: an LLM-powered agent acting on $U$'s
behalf, responsible for querying registries and coordinating sub-agents.
\textbf{Service Agent~($A_S$)}: a remote agent, potentially attacker-controlled,
that provides specialized services such as payment processing or booking.
\textbf{Platform Infrastructure~($I$)}: registries, marketplaces,
payment services, and the communication layer.

Trust boundaries exist between each pair of principals.
The critical observation is that in current agentic commerce platforms,
$A_O$ transitively trusts content from $A_S$ and $I$
without cryptographic integrity verification. This creates the attack surface
we characterize below.

\subsection{Adversary Types}

We consider three adversaries, consistent with~\cite{louck2025security}:
\begin{itemize}
\item \textbf{Rogue Service~($\mathcal{A}_\text{RS}$)}: a
legitimate-looking but attacker-controlled service agent,
marketplace, or registry.
Has full control over the content it serves.

\item \textbf{Network Adversary~($\mathcal{A}_\text{NET}$)}:
a party with read access to network traffic, server logs,
or log aggregation services
(e.g., DevOps team, monitoring system, CDN provider).

\item \textbf{Concurrent Attacker~($\mathcal{A}_\text{CONC}$)}:
an attacker who can issue multiple simultaneous API requests
to a single service endpoint.
\end{itemize}

We explicitly exclude: breaking underlying cryptographic primitives
(RSA/ECDSA/AES), full platform compromise, physical access.

\subsection{Attack Taxonomy}

\begin{table*}[!t]
\caption{Six root-cause classes in \aipbench{}.
RC-1 through RC-5 are structural (100\% \asr{}, model-independent).
RC-6 is semantic (0--100\% \asr{}, model-dependent).}
\label{tab:taxonomy}
\centering
\small
\renewcommand{\arraystretch}{1.15}
\begin{tabular}{@{}p{1.1cm}p{6.5cm}p{2.2cm}cc@{}}
\toprule
\textbf{Class} & \textbf{Description} & \textbf{Adv.} & \textbf{\asr{}} \\
\midrule
RC-1 &
  Registry/marketplace content accepted without integrity verification &
  $\mathcal{A}_\text{RS}$ & 100\% \\
RC-2 &
  Payment destination taken from untrusted source without DID binding &
  $\mathcal{A}_\text{RS}$ & 100\% \\
RC-3 &
  Authentication credential transmitted via observable channel &
  $\mathcal{A}_\text{NET}$ & 100\% \\
RC-4 &
  Non-atomic check-then-execute in payment state (TOCTOU\footnote{Time-of-Check-to-Time-of-Use (TOCTOU, CWE-362): a race condition
  where a security check and the action it guards are not atomic,
  allowing a concurrent actor to exploit the window between them.}) &
  $\mathcal{A}_\text{CONC}$ & 100\%$^\dagger$ \\
RC-5 &
  Authentication exists but authorization scope not enforced &
  $\mathcal{A}_\text{RS}$ & 100\% \\
RC-6 &
  Behavioral manipulation via poisoned agent descriptions (IPI) &
  $\mathcal{A}_\text{RS}$ & 0--100\%$^\ddagger$ \\
\bottomrule
\end{tabular}\par\vspace{2pt}
{\footnotesize\raggedright $^\dagger$RC-4 is 100\% only under adversarial scheduling (natural jitter: 30--57\%, Remark~\ref{rem:rc4}), structural proof regardless.\quad $^\ddagger$Model-dependent (Section~\ref{sec:llm-impact}).\par}
\end{table*}

The six root-cause classes (RC-1 through RC-6) emerged bottom-up from clustering
the 33 vulnerabilities by shared root cause, not from a prior scheme.
Table~\ref{tab:taxonomy} defines them. RC-3, for example, captures authentication
credentials transmitted over observable channels such as logs or URLs.
RC-1 to RC-5 are structural per Definition~\ref{def:structural}, and
RC-6 is semantic per Definition~\ref{def:semantic}.
Each vulnerability maps to exactly one primary class, distributed as RC-1 (3),
RC-2 (3), RC-3 (4), RC-4 (9), and RC-5 (13), plus the composite CHAIN. A
consolidated table by platform and proof method appears in
 Appendix~\ref{supp:consolidated}.

\section{\coralos{}: 11 Vulnerabilities}
\label{sec:coralos}

\subsection{Platform Overview}

\coralos{}~\cite{coralos2025} (studied at v1.3.0, commit~\texttt{a38912b}, of whose
eleven findings ten persist unchanged in v1.4.0, commit~\texttt{68b8326}, with V7
remediated there) is a Kotlin/Ktor
server that orchestrates multi-agent workflows through four subsystems: a
marketplace agent registry (\texttt{MarketplaceAgentRegistrySource}) that fetches
agent definitions from \texttt{marketplace\allowbreak.coralprotocol\allowbreak.ai}, an LLM proxy layer
routing model calls through configured providers (e.g., OpenRouter, Anthropic), a
Solana-based payment subsystem with blockchain escrow, and MCP server endpoints
over SSE~(Server-Sent Events) or Streamable HTTP.

Three vulnerabilities from~\cite{louck2025security}
(including CVE-2026-30970, fixed in v1.1.0) are not repeated.
The 11~vulnerabilities below are new relative to that prior
transport-layer study, which did not examine the commerce stack.
The V-identifiers are non-contiguous because they index our full working
catalog, of which V1--V3, V8, and V12 are prior-work or withdrawn items
excluded here.

\subsection{Vulnerability Summary}

\begin{table*}[!t]
\caption{\coralos{}: 11 structural vulnerabilities, measured on v1.3.0 (commit \texttt{a38912b}); ten persist in v1.4.0 (commit \texttt{68b8326}), V7 remediated there.
\emph{Live} = proven against running system. \emph{Struct.} = structural code proof.}
\label{tab:coralos}
\centering
\renewcommand{\arraystretch}{1.15}
\resizebox{\textwidth}{!}{
\begin{tabular}{@{}llllccc@{}}
\toprule
\textbf{ID} & \textbf{Title} & \textbf{Class} & \textbf{File:Lines} &
\textbf{Sev.} & \textbf{\asr{}} & \textbf{Proof} \\
\midrule
V4  & agentSecret in URL path (CWE-598) &
  RC-3 & \emph{MCP routing} & HIGH & 100\% & Live \\
V5  & Marketplace supply chain IPI      &
  RC-1/6 & \emph{registry client} & HIGH & 100\%* & Live \\
V6  & LLM proxy model override (non-streaming) &
  RC-5 & \emph{LLM proxy} & MED & 100\% & Live \\
V7  & x402 budget TOCTOU (CWE-362)     &
  RC-4 & \emph{RPC budget path} & MED & ---$^\dagger$ & Struct. \\
V9  & Payment wallet hijack via federation &
  RC-2 & \emph{wallet handler} & HIGH & 100\% & Live \\
V10 & CSRF via permissive cross-origin sharing (CWE-352)      &
  RC-5 & \emph{CORS module} & HIGH & 100\% & Live \\
V11 & Open redirect in AuthApi (CWE-601) &
  RC-5 & \emph{auth redirect} & HIGH & 100\% & Live \\
V13 & LLM proxy model override (streaming) &
  RC-5 & \emph{proxy strategy} & MED & 100\% & Live \\
V14 & FileAgentRegistry hot-reload TOCTOU &
  RC-4 & \emph{registry loader} & HIGH & --- & Struct. \\
V15 & WebSocket namespace isolation bypass &
  RC-5 & \emph{event router} & MED & 100\% & Live \\
CHAIN & V5$\to$V4$\to$V9 multi-stage   &
  RC-1,2,3 & --- & CRIT & 100\% & Live \\
\bottomrule
\end{tabular}}
\par\vspace{2pt}
{\footnotesize\raggedright *RC-1 delivery: 100\%. RC-6 behavioral impact: model-dependent (Section~\ref{sec:llm-impact}).\quad $^\dagger$Live exploitation requires x402 payment service configured, structural gap proven.\par}
\end{table*}

Table~\ref{tab:coralos} summarizes all 11 \coralos{} vulnerabilities. The
eleventh, CHAIN, is a composite kill-chain that stages V5, V4, and V9.
We detail V4, V5, and V9, the three most impactful vulnerabilities, in the text below.
The remaining vulnerabilities are in  Supplementary~\ref{supp:coralos}.

\subsection{V4: agentSecret in URL Path (RC-3)}
\label{sec:v4}

\textbf{Root cause.}
Every MCP agent connection in \coralos{}
is authenticated via a per-agent UUID secret (\emph{agentSecret})
embedded directly as a URL path segment: the \texttt{/sse/} and \texttt{/mcp}
MCP routes both declare it as a \texttt{\{agentSecret\}} path parameter
(\emph{the MCP route table}).

\coralos{}'s built-in \texttt{CallLogging} Ktor plugin logs
\texttt{call.request.uri} at TRACE level
(\emph{the request-logging module}).
This is the level \coralos{} ships in its default configuration, so the secret
is written to logs in the default deployment, not only under debug settings.
This URI contains the full agentSecret.
Any party with read access to server logs (operations staff,
monitoring services, CDNs, cloud log aggregators) can harvest
active agent secrets in real time.

\textbf{Proof.}
We extracted the agentSecret from Docker logs within 138\,ms of
session creation (the secret appears when the agent process logs
its environment: \texttt{CORAL\allowbreak\_AGENT\allowbreak\_SECRET=\{uuid\}}).
Using the stolen secret, we sent an MCP \texttt{initialize} message to
\texttt{POST /mcp/v1/\{secret\}/mcp} and received \texttt{HTTP\,200}
with the server's full capability manifest, confirming that the stolen
secret grants full agent-session access.
We also authenticated to the Agent RPC API
(\texttt{/api/v1/agent-rpc/}) using the secret as a Bearer token.

\textbf{Impact.}
An attacker with server-log access can eavesdrop on all inter-agent tool
calls in the active session, abuse the LLM proxy to make unlimited model
calls at the operator's expense, and issue unauthorized payment claims.

\subsection{V5: Marketplace Supply Chain IPI (RC-1/RC-6)}
\label{sec:v5}

\textbf{Root cause.}
\texttt{MarketplaceAgentRegistrySource} fetches agent definitions from
the marketplace server and accepts the response without any signature
or integrity check.
The registry client calls \texttt{client.get(url).body<...>()} and returns the
result directly, with no signature verification and no hash pinning.

An attacker controlling the marketplace server, or one positioned for a
DNS or BGP hijack on \texttt{marketplace\allowbreak.coralprotocol\allowbreak.ai}, can serve
agent definitions containing IPI instructions in the
\texttt{description}, \texttt{readme}, and option \texttt{default}
fields.
This is structurally analogous to supply-chain attacks on npm
packages~\cite{formal2026}: the \emph{registry} is the attack surface.

\textbf{Proof.}
Under the rogue-service adversary~($\mathcal{A}_\text{RS}$), we deployed a rogue
HTTPS server implementing the \coralos{} marketplace API. The container fetched
our poisoned catalog and resolved a \texttt{payment-processor:2.1.0} agent
definition from our server, logging
\emph{``fetched 2 agents from the marketplace in 800\,ms.''} (RC-1 delivery,
100\% \asr{}).
The vulnerability is the missing integrity check, not a broken transport. TLS
authenticates the marketplace endpoint but never verifies content, so any
adversary who controls that endpoint (a compromised or rogue service, or DNS/BGP
redirection paired with an ACME-issued certificate) delivers unsigned poisoned
content that \coralos{} accepts. Our harness stands up the rogue marketplace with
a self-signed certificate for convenience. The gap that response signing closes
is the missing content signature, which persists under valid TLS.
The behavioral impact of the embedded IPI payload is model-dependent, and
Section~\ref{sec:llm-impact} reports per-model \asr{}.

\textbf{Impact.}
Successful RC-1 delivery means every orchestrator agent that reads
the poisoned description receives the IPI instruction.
Combined with RC-6 vulnerability in Flash-tier models (Section~\ref{sec:llm-impact}),
this enables full payment hijacking without breaking any cryptographic primitive.

\subsection{V9: Payment Wallet Hijack via Federation (RC-2)}
\label{sec:v9}

\textbf{Root cause.}
When \coralos{} federates with a remote server to hire an agent,
it calls \texttt{GraphAgentServer\allowbreak.getWallet()} to obtain the payment
recipient address for the Solana escrow.
The HTTP response is trusted unconditionally: \texttt{getWallet()} returns
\texttt{response.bodyAsText()} whenever the status is \texttt{OK}, with no DID
verification and no cryptographic binding to the server identity, and the
returned address is passed straight to the Solana escrow in \texttt{toRemote()}.

A rogue remote server returns the attacker's Solana wallet address,
causing the victim to create a blockchain escrow that pays the attacker.

\textbf{Proof.}
We compiled \coralos{}~v1.4.0 from source and drove its production
\texttt{GraphAgentServer\allowbreak.getWallet()} against a rogue federation server returning
attacker wallet \texttt{CHKbzJYk...} at \texttt{GET\allowbreak /api\allowbreak /v1\allowbreak /agent-rental\allowbreak /wallet}.
The real \texttt{getWallet()} issued a live request (logged at the rogue server
from \texttt{127.0.0.1}) and returned the attacker address unverified, which
\texttt{GraphAgentProvider} embeds as the escrow beneficiary (\texttt{Remote.wallet},
result hash~\texttt{821d683f\ldots}).
A variant V9b injects the wallet directly through the session API:
\coralos{} accepted \texttt{provider.wallet = ATTACKER\_ADDRESS} in the
\texttt{GraphAgentProvider.Remote} field, extending the trust gap to the client API.

\subsection{CHAIN: Multi-Stage V5$\to$V4$\to$V9}
\label{sec:chain}

The three highest-impact vulnerabilities compose into a single attack:

\begin{enumerate}
\item \textbf{V5 (supply chain):}
rogue marketplace poisons the agent registry with a malicious
\texttt{payment-processor} agent containing an IPI payload.
\item \textbf{V4 (credential harvest):}
when an orchestrator launches the poisoned agent, its agentSecret
appears in server logs within 138\,ms, allowing the attacker to
hijack the active MCP session.
\item \textbf{V9 (payment redirect):}
the rogue federation server returns the attacker wallet,
causing the Solana escrow to send payment to the wrong address.
\end{enumerate}

We executed all three stages in a single orchestrated run
(result hash~\texttt{bf697eca\ldots}),
achieving: (i)~full MCP session visibility (V4),
(ii)~behavioral corruption of the agent (V5),
and (iii)~financial theft (V9).

\section{\fetchai{} uAgents: 8 Vulnerabilities}
\label{sec:fetchai}

\subsection{Platform Overview}

\fetchai{} uAgents is a Python framework for decentralized
AI agent communication.
Agents register their HTTP endpoint URLs in the \emph{Almanac},
a centralized registry operated by Fetch.ai.
When Agent~A wants to contact Agent~B, it calls
\texttt{AlmanacApiResolver\allowbreak.resolve(address)} to obtain B's
current endpoint URL, then posts messages to that URL.
Agent identity uses DIDs and wallets are generated locally,
with private keys stored on disk.

\subsection{F-1: Endpoint Hijacking via Rogue Almanac (RC-1)}
\label{sec:f1}

\textbf{Root cause.}
\texttt{AlmanacApiResolver\allowbreak.\_api\_resolve()} fetches endpoint URLs
from the Almanac API as a plain JSON response
(\emph{the Almanac resolver}): the resolver takes the
returned endpoint list directly, neither checking a signature nor re-verifying
that each URL belongs to the claimed address.
An attacker with DNS or BGP control over
\texttt{agentverse.ai} (the Almanac host) can return a
malicious endpoint URL, redirecting all messages from Agent~A
to the attacker's server.

\textbf{Proof (with real uAgents library).}
We installed the actual \texttt{uagents} Python package
(\texttt{pip install uagents}) and configured
\texttt{AlmanacApiResolver} to use our rogue Almanac server, then called
\texttt{resolve()} for the victim agent, which returned our attacker-controlled
endpoint.
A payment-coordination message with field
\texttt{sensitive\_data:\ "payment\allowbreak\_credential\allowbreak\_token\allowbreak\_xyz"}
was delivered to our attacker endpoint (HTTP~200).
This proof uses the real library code, not a simulation
(result hash: \texttt{9822c958\ldots}).

\subsection{Summary: F-2 Through F-8}

Beyond F-1, we identify seven further structural vulnerabilities in
\fetchai{} uAgents, enumerated and analyzed in
 Supplementary~\ref{supp:fetchai} (Table~\ref{tab:fetchai}).
We highlight two below.

\textbf{F-4 (Private Keys in Plaintext JSON).}
Both the identity signing key and the Cosmos wallet private key are
stored unencrypted in \texttt{private\_keys.json} at
\texttt{os.getcwd()} with default filesystem permissions
(\emph{the key storage module}).
Any co-located process, or an attacker with file-system read access, can
steal both keys, enabling identity impersonation and wallet fund theft.

\textbf{F-8 (Wallet Key Mismatch on Restart).}
\texttt{get\_or\_create\_private\_keys()} calls \texttt{PrivateKey()}
twice (lines~147 and~149), generating two independent random keys.
The first call's result is \emph{returned} to the caller (Key~B).
The second call's result is \emph{saved to disk} (Key~C, $B\neq C$).
On the next agent restart, the saved Key~C is loaded, changing the
wallet address.
Any funds sent to the first wallet address (Key~B's address)
become inaccessible, because Key~B was never persisted.
We verified this deterministically by instrumenting the function.

\section{\aptwo{} v0.2.0: 14 Vulnerabilities}
\label{sec:ap2}

\subsection{Platform Overview}

\aptwo{}~\cite{ap22025} is Google's Agent Payments Protocol,
designed to let AI shopping agents execute payment transactions on behalf
of users while maintaining cryptographic accountability.
Its security model uses three mandate objects (Intent, Cart,
and Payment), each carried as a W3C Verifiable Credential~\cite{w3c_vc}
and cryptographically signed by the issuing party.
Mandates form an auditable chain from user intent to final payment.

We analyzed the v0.2.0 Python reference implementation
(\texttt{code/samples/python/}).
\textbf{Our analysis is strictly structural} (Definitions~\ref{def:structural}--\ref{def:semantic}):
we find implementation-level bugs in authentication, payment atomicity,
and session state, orthogonal to the mandate-chain replay and context-binding
failures addressed in~\cite{ztrv2025}.
A-AP2-11 was corroborated live end-to-end against a running merchant MCP
server that returned business-logic responses to an unauthenticated
client (result hash~\texttt{38e3865f\ldots}), and the rest are proven at the
code level (Table~\ref{tab:ap2}).

The 14 findings split into three categories by how deployment affects them (the
``Cat.'' column of Table~\ref{tab:ap2}). Three are Python-reference-specific
(Py): the Go reference implements these correctly, so they vanish under the Go
build. Another three are protocol-design gaps (Spec) that no reference
implementation corrects, because the AP2 specification itself leaves email
binding, nonce freshness, and payee constraints unenforced. The remaining eight
are cross-reference bugs (Both), present in both reference codebases or their
shared utility code. The eleven Spec and Both findings persist regardless of
implementation. Mutual TLS does not close them: they are application-logic flaws
such as the token-redemption TOCTOU (A-AP2-1), unenforced payee constraints
(A-AP2-12), and missing nonce freshness (A-AP2-8), not transport weaknesses.

\subsection{Vulnerability Table}

Table~\ref{tab:ap2} lists all 14 \aptwo{} vulnerabilities with class, category,
and proof method. All are structural flaws in the reference
implementation: a token-redemption race enabling double-spend (A-AP2-1), a
fail-open certificate-chain check (A-AP2-4), missing authentication on the
merchant MCP tools (A-AP2-11), and a credentials provider that verifies the
mandate but not the caller's identity (A-AP2-15). Per-vulnerability analyses of
these four appear in  Supplementary~\ref{supp:ap2}.

\begin{table}[ht]
\caption{\aptwo{} v0.2.0: 14 structural vulnerabilities. Category Py is
Python-reference-specific (the Go reference fixes it), Spec is an AP2
protocol-design gap, Both is present in both reference codebases.
SD-JWT is a selective-disclosure JWT.}
\label{tab:ap2}
\centering
\small
\renewcommand{\arraystretch}{1.05}
\begin{tabular}{@{}lp{3.1cm}lll@{}}
\toprule
\textbf{ID} & \textbf{Title} & \textbf{Class} & \textbf{Cat.} & \textbf{Proof} \\
\midrule
A-AP2-1  & Token redemption TOCTOU         & RC-4 & Both & Code \\
A-AP2-2  & File token store race           & RC-4 & Both & Code \\
A-AP2-3  & Token binding race              & RC-4 & Both & Code \\
A-AP2-4  & Empty trusted-roots bypass      & RC-5 & Py   & Code \\
A-AP2-5  & \texttt{user\_email} trusted    & RC-2 & Spec & Code \\
A-AP2-6  & Recurrence file race            & RC-4 & Both & Code \\
A-AP2-7  & Mandate filename collision      & RC-5 & Both & Code \\
A-AP2-8  & Nonce freshness missing         & RC-4 & Spec & Code \\
A-AP2-10 & Debug logs expose SD-JWTs       & RC-3 & Both & Code \\
A-AP2-11 & No auth on merchant MCP         & RC-5 & Py   & Code/Live \\
A-AP2-12 & Payee injection bypass          & RC-5 & Spec & Code \\
A-AP2-13 & Revoked token not checked       & RC-5 & Both & Code \\
A-AP2-14 & Budget integer underflow        & RC-4 & Both & Code \\
A-AP2-15 & Role confusion (creds provider) & RC-5 & Py   & Code \\
\bottomrule
\end{tabular}
\end{table}

\section{\aipbench{}: A Deterministic Benchmark}
\label{sec:bench}

\subsection{Design Rationale}

Existing IPI benchmarks~\cite{agentdojo2024,injecagent2024} use LLM judges,
which conflate model-dependent behavioral effects with model-independent
structural flaws and can themselves be prompt-injected or drift, producing
unreliable verdicts.

\aipbench{} avoids these problems through exclusively
\emph{deterministic judges}:

an HTTP status code (V4 session hijack), the attacker wallet string in the
constructed payment request (V9), a log-pattern regex for the leaked secret
(V4), an unauthorized-namespace event count over WebSocket (V15), the
\texttt{model} field in the proxy response (V6/V13), and vulnerable-code-pattern
presence in source (structural proofs). Each judge is a simple predicate with
no generative component.

\subsection{Coverage and \asr{} Summary}

Table~\ref{tab:bench} summarizes attack-success rates across all \aipbench{} scenarios.

\begin{table*}[!ht]
\caption{\aipbench{}: attack-success rates across all scenarios.
\emph{Live} = proven against running system. \emph{Code} = structural code proof.}
\label{tab:bench}
\centering
\small
\renewcommand{\arraystretch}{1.1}
\begin{tabular}{@{}p{4.5cm}p{1.8cm}p{1.2cm}cc@{}}
\toprule
\textbf{Attack} & \textbf{Platform} & \textbf{Class} & \textbf{\asr{}} & \textbf{Proof} \\
\midrule
V4 secret harvest + MCP hijack & \coralos{} & RC-3 & 100\% & Live \\
V5 delivery (marketplace)      & \coralos{} & RC-1 & 100\% & Live \\
V6 + V13 model override        & \coralos{} & RC-5 & 100\% & Live \\
V9 wallet hijack               & \coralos{} & RC-2 & 100\% & Live \\
V10 CSRF                       & \coralos{} & RC-5 & 100\% & Live \\
V11 open redirect              & \coralos{} & RC-5 & 100\% & Live \\
V15 namespace bypass           & \coralos{} & RC-5 & 100\% & Live \\
CHAIN (V5$\to$V4$\to$V9)      & \coralos{} & multi & 100\% & Live \\
F-1 endpoint hijack            & \fetchai{} & RC-1 & 100\% & Live \\
A-AP2-1,4,5,11,15             & \aptwo{}   & multi & ---    & Code \\
\midrule
\textbf{All RC-1 to RC-5} & All & --- & \textbf{100\%} & --- \\
V5 impact (GPT-4o-mini, Gemini Flash, Mistral) & \coralos{} & RC-6 & 99--100\% & Live \\
V5 impact (GPT-4o) & \coralos{} & RC-6 & 68\% & Live \\
V5 impact (Sonnet, Haiku, Gemini Pro) & \coralos{} & RC-6 & 0\%  & Live \\
\bottomrule
\end{tabular}
\end{table*}

Each scenario includes: the exact API call sequence, expected
judge output, a SHA-256 hash of the reference result,
and a Docker Compose environment for reproduction.

\section{\pcat{}: Unified Defense Framework}
\label{sec:pcat}

\subsection{Architecture and Design Principles}

\pcat{} (Protocol-level Commerce Agent Trust) is a platform-agnostic
HTTP sidecar that intercepts all communication between agents and
commerce services:
\[
[\text{Agent}]
\xrightarrow{\text{all requests}}
[\pcat{}\text{ :4443}]
\xrightarrow{\text{verified}}
[\text{Commerce services}]
\]
\pcat{} requires no change to platform source code, though two principles depend
on ecosystem adoption: P1 assumes registries sign responses and P2 assumes a DID
resolver. A minimal deployment gains P3, P4, and P5 immediately, with P1 and P2
strengthening as keys and a resolver are provisioned. Their distribution,
rotation, and revocation are operational rather than protocol concerns.
Its scope is the HTTP path between agents and commerce services, and the
vulnerabilities that traverse it are addressed by the five principles below,
which together cover root-cause classes RC-1 through RC-5.
Those residing in adjacent layers, specifically filesystem races,
internal LLM-proxy calls, WebSocket events, and key-initialization code,
require targeted platform countermeasures documented in
Section~\ref{sec:pcat-limits}.
The five principles are:

\medskip
\noindent\textbf{P1: Response Integrity Verification (RC-1).}
Every response from a registered marketplace or registry must carry
an ECDSA~P-256 signature over
\texttt{SHA256(JSON(body\_hash,\allowbreak{} timestamp,\allowbreak{} nonce))}.
\pcat{} verifies the signature before forwarding, and
responses from registered services that lack a valid signature are blocked.
Replay attacks are prevented by a time-bounded nonce cache
(30-second window, $O(\log n)$ eviction via heap).
Registries publish signing keys through an existing trust root such as
WebPKI/COSE or a DID document, and key rollover follows the registry's own
rotation, so P1 introduces no new key-distribution authority.
P1 assumes the registry's signing key is uncompromised: a valid signature from a
stolen key still verifies. Removing this single point of failure would require
publisher-plus-registry countersigning or a transparency log~\cite{provenance2026}.

\noindent\textbf{P2: Caller Identity Binding (RC-2).}
Requests carrying payment destinations or user identity claims
must include a DID-based caller proof.
\pcat{} resolves the caller's DID, verifies the proof against the
DID document, and checks that the claimed identity is within
the caller's authorized scope.
(\emph{Note:} \pcat{} ships a real \texttt{did:web} resolver: a live HTTP fetch
of the controller-hosted DID document, ECDSA verification of the caller's proof,
wallet-to-DID authorization, and revocation via the document's status.
Experiment~E blocks all six RC-2 cases (three V9 variants, an unauthorized
A-AP2-5 wallet, a revoked DID, and a rotated key) at 0\% \asr{} and admits the
authorized wallet. Without a resolver,
P2 logs and allows (Section~\ref{sec:pcat-limits}).)

\noindent\textbf{P3: Secure Channel Enforcement (RC-3, RC-5).}
\pcat{} enforces three channel-level policies simultaneously. (i)~Credential
detection in URL paths warns only, as the full fix requires the platform to move
credentials out of the path. (ii)~CSRF blocking rejects cross-origin
state-changing requests without a valid token (HTTP~403), using
\texttt{urlparse().hostname} to avoid substring-bypass attacks. (iii)~Redirect
validation URL-decodes and backslash-normalizes the \texttt{?to=} parameter and
verifies it against a whitelist before forwarding.

\noindent\textbf{P4: Atomic Payment State (RC-4).}
All payment check-and-execute operations route through
\pcat{}'s \texttt{asyncio.Lock}-backed compare-and-swap store.
Check and deduct occur inside a single lock acquisition,
serializing concurrent requests.
A rollback mechanism reverts the deduction if the subsequent
history append raises an exception.

\noindent\textbf{P5: MCP Tool-Call Authorization (RC-5, A-AP2-11, A-AP2-15).}
When a POST body carries a JSON-RPC~2.0 \texttt{tools/call} method targeting
a sensitive operation (e.g., \texttt{complete\_checkout},
\texttt{issue\_payment\_credential}),
\pcat{} requires the caller to supply an \texttt{X-Caller-Identity} header
containing a pre-registered identity token.
If the header is absent or the caller identity is not in the authorized
allowlist for that tool, \pcat{} rejects the request with HTTP~403
before forwarding it to the MCP server.
This addresses A-AP2-11 (unauthenticated endpoints) and A-AP2-15
(role confusion) at the transport layer, without modifying AP2 source code.
P5 applies to HTTP/SSE transport only.
Stdio-transport deployments require host-level process isolation outside \pcat{}'s scope.

\subsection{Formal Security Properties}

\begin{theorem}[P1 Supply-Chain Protection]
\label{thm:p1}
Under P1 with correct ECDSA P-256 key management, the probability
that \pcat{} forwards a forged registry response is at most $\mathit{negl}(\lambda)$
where $\lambda = 256$ is the security parameter.
\end{theorem}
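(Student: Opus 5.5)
The plan is a standard reduction to the existential unforgeability (EUF-CMA) of ECDSA over P-256, with SHA-256 modeled as collision resistant. First I will fix what "forged registry response" means. A response $(\text{body}, \text{timestamp}, \text{nonce}, \sigma)$ is forged if \pcat{} accepts it, yet the registry holding the registered key $pk$ never produced a signature on $m=\texttt{SHA256(JSON(body\_hash, timestamp, nonce))}$ for that exact triple. Replays of genuinely signed triples are a separate case. I will handle them with the nonce-cache argument: within the 30-second window a repeated nonce is rejected, and outside it the timestamp check rejects. So the replay case contributes probability zero, given correct clock and cache semantics.

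Next I will split the forgery event into two disjoint cases. In case (a), the adversary outputs a triple whose digest $m^*$ collides with the digest $m_i$ of some triple the registry actually signed, while the triples differ (either the JSON encoding or the body hash differs). This includes the sub-case where the body differs but $\texttt{SHA256}(\text{body})$ agrees. Because JSON serialization is injective on the canonical encoding \pcat{} uses, any such event yields a SHA-256 collision. I will therefore build a collision-finder $\mathcal{B}_1$ that simulates the registry honestly and outputs the colliding pair, which gives $\Pr[(a)]\le \mathrm{Adv}^{\mathrm{cr}}_{\mathrm{SHA256}}(\mathcal{B}_1)$. In case (b), $m^*$ is a fresh digest. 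Here a forger $\mathcal{B}_2$ takes the EUF-CMA challenge key as the registry key and answers the adversary's registry queries with its signing oracle. When the adversary produces a verifying $(m^*,\sigma^*)$ with $m^*$ never queried, $\mathcal{B}_2$ outputs it, so $\Pr[(b)]\le \mathrm{Adv}^{\mathrm{euf\text{-}cma}}_{\mathrm{ECDSA}}(\mathcal{B}_2)$. "Correct key management" is exactly the hypothesis that the secret key is never revealed, which is what the simulation requires. I also need that $pk$ is authentically bound to the registry through the trust root; I will state this as an assumption, in line with the paper's P1 caveat about stolen keys.

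Summing the two cases gives $\Pr[\text{forward forged}]\le \mathrm{Adv}^{\mathrm{cr}}+\mathrm{Adv}^{\mathrm{euf\text{-}cma}}$. For polynomial-time adversaries both terms are negligible in $\lambda$. Concretely, P-256 and SHA-256 give roughly $2^{-128}$ generic security, consistent with $\lambda=256$ as the key and hash length.

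The main obstacle is making "forged" and the hash-then-sign composition precise. ECDSA already hashes internally, so the signed object is a hash of a hash. The reduction must also make sure JSON canonicalization is unambiguous, since otherwise two distinct triples could serialize identically and fall outside both cases. I would fix a canonical encoding and argue injectivity explicitly. I would also note that ECDSA's EUF-CMA security is itself only established in idealized models, such as the generic group model or with idealized conversion functions, so the theorem is conditional on that assumption.
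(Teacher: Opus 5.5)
Your proposal is correct, but it takes a different and more careful route than the paper. The paper's proof sketch has two steps. It asserts that forging a P1 signature without the registry's private key ``requires solving'' the elliptic-curve discrete-log problem over P-256, and it notes that the nonce cache blocks replays of genuine signatures inside the freshness window. You instead define a forgery precisely, as an accepted triple the registry never signed, and treat replay as a separate non-cryptographic argument. You then split the forgery event into a SHA-256 collision case and a fresh-digest case, and handle the second by an EUF-CMA reduction in which the signing oracle simulates the registry.

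Your route supplies three things the paper's sketch lacks.
\begin{enumerate}
\item \textbf{It accounts for the hash layer.} P1 signs a digest of the body hash. An adversary holding a SHA-256 collision on bodies could therefore swap content under a genuine signature without touching the discrete log, so ECDLP hardness alone cannot yield the theorem. Your case (a), together with injectivity of the canonical JSON encoding, closes this.
\item \textbf{It gets the direction of the assumption right.} ECDLP hardness is necessary for ECDSA unforgeability. It is known to be sufficient only in idealized models (generic group, idealized conversion function), which is exactly the caveat you record. The paper's ``forging requires solving ECDLP'' asserts an implication with no standard-model proof.
\item \textbf{It states the replay assumptions explicitly.} You make clear that the replay argument depends on a timestamp-freshness check and on a cache keyed by nonce. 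Your definition also sidesteps ECDSA malleability: an $(r, n-s)$ variant of a genuine signature is not a forgery, and the nonce cache still catches its replay.
\end{enumerate}

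Your concrete figure of roughly $2^{-128}$ is also more accurate than the statement's $\lambda = 256$, since P-256 and SHA-256 each give about 128-bit security. What the paper's sketch gains is brevity. Your decomposition is what an actual proof of the theorem would need.
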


\begin{proof}[Proof sketch]
Forging a valid P1 signature without the registry's private key
requires solving the elliptic-curve discrete-log problem over P-256, which is computationally
infeasible under the standard discrete-log hardness assumption.
The nonce cache prevents replay of genuine signatures within the
freshness window.
\end{proof}

\begin{theorem}[P3 CSRF Prevention]
\label{thm:p3}
Under P3, any POST/PUT/DELETE/PATCH request with a cookie-session header
from an origin $o$ where $\mathit{hostname}(o) \notin W$ (the
same-origin whitelist) without a valid CSRF token is rejected with
HTTP~403. The origin check uses \texttt{urlparse().hostname}, preventing
substring-bypass attacks (e.g., \texttt{localhost.attacker.com}).
\end{theorem}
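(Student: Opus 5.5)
The proof will be a case analysis that follows the P3 request-filtering predicate, together with a lemma about hostname extraction. First I will model P3's CSRF check as a pure function $f(\mathit{method}, \mathit{headers}) \in \{\text{forward}, 403\}$. Its definition is read directly from the sidecar logic. If the method is in $\{\mathrm{POST},\mathrm{PUT},\mathrm{DELETE},\mathrm{PATCH}\}$, a cookie-session header is present, and the \texttt{Origin} (or, when \texttt{Origin} is absent, \texttt{Referer}) header yields hostname $h \notin W$, then $f$ returns 403 unless a valid CSRF token accompanies the request. With this model, the theorem reduces to showing that every request meeting the hypotheses falls into the 403 branch. That step is immediate by inspecting the guard conditions, since each premise of the theorem matches exactly one conjunct of the guard.

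Next, I will make precise what ``valid CSRF token'' means. I will take it as a token bound to the session and unpredictable to the cross-origin attacker, for example the double-submit or synchronizer pattern. Given that definition, the no-token case and the invalid-token case both fall through to rejection. I will also check fail-closed behavior: if the \texttt{Origin} header is missing, unparsable, or \texttt{null}, the hostname is undefined, which P3 must treat as $\notin W$. I will state this explicitly as an assumption on the implementation so that a malformed header cannot escape the guard.

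The main obstacle is the substring-bypass clause, which requires a lemma about \texttt{urlparse().hostname}. The lemma states that for any origin string $o$, the extracted hostname is exactly the authority's host component per RFC~3986: it is lowercased, and userinfo and port are stripped. Membership is then tested by exact string equality against $W$, not by prefix or substring matching. From this it follows that \texttt{http://localhost.attacker.com} yields hostname \texttt{localhost.attacker.com}, which is not in $W=\{\texttt{localhost},\dots\}$. I would also enumerate the standard confusions and show each yields a hostname outside $W$ or is rejected at parse time: userinfo tricks (\texttt{http://localhost@attacker.com} gives \texttt{attacker.com}), backslashes, embedded ports, and percent-encoding. The risk is parser differentials between Python's \texttt{urlparse} and browser origin serialization. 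I would address this by noting that browsers emit \texttt{Origin} as a serialized scheme/host/port triple with no userinfo or path. This restricts the input domain to well-formed origins, on which the two parsers agree.

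Finally, I will combine the guard analysis with the lemma to conclude that the request is rejected with HTTP~403. I will note that the argument is deterministic, with no probabilistic bound needed, in contrast to Theorem~\ref{thm:p1}.
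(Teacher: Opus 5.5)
The paper gives no proof of Theorem~\ref{thm:p3}. It treats the statement as true by construction of the P3 guard, and backs it only with the P3 evaluation (a cross-origin tokenless POST received HTTP~403, with zero false positives on benign traffic) and with the audit note that the substring bypass was fixed by switching to \texttt{urlparse().hostname}. Your guard case analysis plus hostname lemma makes that same by-construction argument explicit, and its main line is correct: exact set membership of the lowercased host is what defeats \texttt{localhost.attacker.com}. Your version also names assumptions the paper leaves implicit. One is what counts as a valid token. The other is how a missing or \texttt{null} \texttt{Origin} is handled, which matters only if ``from an origin $o$'' means the request's true origin rather than the header value. The paper does not say whether P3 fails closed in that case, so you are right to state it as an assumption.

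One step fails as written. Your enumeration claims that every parser confusion yields a hostname outside $W$ or a parse error, and that is false on arbitrary strings. Python's \texttt{urlsplit} ends the netloc only at \texttt{/}, \texttt{?}, or the fragment delimiter, and \texttt{hostname} keeps whatever follows the last \texttt{@}. So \texttt{http://attacker.com\textbackslash{}@localhost} yields \texttt{localhost}~$\in W$, whereas a WHATWG parser reads its host as \texttt{attacker.com}. Your argument survives only because you restrict inputs to browser-serialized \texttt{Origin} values, which contain no userinfo, backslash, or path. That restriction is legitimate: a client that can forge \texttt{Origin} is not the victim's browser and does not hold the victim's session cookie. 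State it as the input domain of your lemma, and drop the string-trick enumeration rather than presenting it as part of the proof.
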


\begin{theorem}[P4 TOCTOU Prevention]
\label{thm:p4}
Under P4 with single-writer deployment, all payment
check-and-deduct operations on a given budget are serialized by
\texttt{asyncio.Lock}. The probability of a concurrent double-deduction
is 0 within a single Python process.
\end{theorem}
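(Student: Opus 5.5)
The argument rests on the asyncio execution model: a single Python process runs one event loop, and coroutines interleave only at explicit \texttt{await} points. I would first pin down what the P4 store does. Following the description of P4, each check-and-deduct is a coroutine of the form \texttt{async with lock:} read the balance $b$, test $b \ge a$, write $b-a$, append to the history, and roll back on exception. All of this runs inside one \texttt{asyncio.Lock} dedicated to the budget. The single-writer hypothesis of the statement says that every mutation of that budget's state goes through this one coroutine in this one process. No other worker process, and no direct write that bypasses the store, touches the balance.

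Next I would prove mutual exclusion. \texttt{asyncio.Lock} guarantees that at most one coroutine holds the lock at any time. A coroutine that calls \texttt{acquire()} while the lock is held suspends until the holder releases it. So the critical-section executions $c_1, c_2, \dots$ on a given budget are pairwise non-overlapping. Ordering them by acquisition time gives a total order, and every execution of the interleaved program is equivalent to running the critical sections sequentially in that order. The remaining code outside the lock does not read or write budget state, so it cannot affect the argument.

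Then I would argue by induction over this serial order. The invariant is that the balance is at least $0$ and equals the initial budget minus the sum of the committed deductions. Take two requests $r_1, r_2$ against the same funds. The second one to acquire the lock reads the balance written by the first, since there is no intervening writer. Its check therefore fails whenever the first deduction has already consumed the funds. A double-deduction would need both checks to read the same stale balance, which contradicts the serialization. Hence the event has probability $0$ under every scheduling, including the adversarial scheduling of $\mathcal{A}_\text{CONC}$ from Remark~\ref{rem:rc4}, and not just with high probability. The rollback path keeps the invariant intact. If the history append raises an exception, the deduction is reverted before the lock is released, so no other coroutine ever observes a partial state.

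The main obstacle is being precise about what the lock actually covers. I expect the delicate step to be justifying that the critical section contains no path that releases the lock early. This covers exception paths, which \texttt{async with} handles, and any cancellation in the middle of the critical section. A \texttt{CancelledError} raised at an internal \texttt{await} must trigger the rollback before the lock is released, so I would check that the rollback also runs on cancellation and not only on ordinary exceptions. I would also state the boundary explicitly. The guarantee covers a single process and a single event loop. With multiple workers, as in Experiment~B, each process has its own lock, and the theorem makes no claim there; that case needs a cross-process compare-and-swap, which is outside the hypothesis.
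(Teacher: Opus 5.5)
Your proposal is correct and follows the paper's own route. The paper gives no separate proof for this theorem: it rests on the remark, in the P4 description and limitation L2, that check and deduct occur inside one \texttt{asyncio.Lock} acquisition, so concurrent requests are totally ordered and the later one sees the earlier write; your mutual-exclusion, serial-order and induction steps, plus the explicit single-event-loop boundary, make that justification rigorous. One simplification: the cancellation step you flag as delicate is not needed for the stated property, because a deduction left unreverted inside the lock can only under-spend, so the one-sided invariant that successful deductions never exceed the initial budget still holds and no double-deduction results.
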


\subsection{Evaluation}
\label{sec:pcat-eval}

We evaluated \pcat{} by running our attack suite through the middleware.
Because \pcat{} interposes on the HTTP interface that all three platforms
share, each principle is exercised against its corresponding attack across
platforms: P1 against the \fetchai{} F-1 endpoint hijack, P3 against the
\coralos{} CSRF and open-redirect attacks, and P4 and P5 against the
\aptwo{} token-redemption race and unauthenticated MCP endpoints
(Table~\ref{tab:pcat-eval}).
For each attack we measured \asr{} before and after deploying
\pcat{}, and the false-positive rate (FPR) on 10{,}000 benign requests.
These cross-platform blocks are confirmed against the live attack payloads
using the reference \pcat{} implementation (Experiment~A,
 Supplementary~\ref{supp:bench}).

\begin{table*}[!ht]
\caption{\pcat{} evaluation: \asr{} before and after, with false-positive rate (FPR)
on 10{,}000 benign requests. RC-6 (semantic) is outside \pcat{}'s scope.}
\label{tab:pcat-eval}
\centering
\small
\renewcommand{\arraystretch}{1.1}
\begin{tabular}{@{}p{3.3cm}p{1.2cm}p{1.4cm}ccc@{}}
\toprule
\textbf{Attack} & \textbf{Class} & \textbf{Principle} &
\textbf{Before} & \textbf{After} & \textbf{FPR} \\
\midrule
V5 marketplace IPI (delivery) & RC-1 & P1 & 100\% & \textbf{0\%} & 0\% \\
F-1 endpoint hijack           & RC-1 & P1 & 100\% & \textbf{0\%} & 0\% \\
V9 wallet hijack              & RC-2 & P2 & 100\% & \textbf{0\%}* & 0\% \\
V10 CSRF                      & RC-5 & P3 & 100\% & \textbf{0\%} & 0\% \\
V11 open redirect             & RC-5 & P3 & 100\% & \textbf{0\%} & 0\% \\
V7 x402 TOCTOU                & RC-4 & P4 & ---     & \textbf{0\%}$^\dagger$ & 0\% \\
A-AP2-1 token TOCTOU          & RC-4 & P4 & 100\% & \textbf{0\%} & 0\% \\
V4 secret in URL              & RC-3 & P3 & 100\% & Partial$^\ddagger$ & --- \\
A-AP2-11 unauth MCP endpoints & RC-5 & P5 & 100\% & \textbf{0\%} & 0\% \\
A-AP2-15 role confusion       & RC-5 & P5 & 100\% & \textbf{0\%} & 0\% \\
\midrule
RC-1,2,4,5 + MCP-layer (P5)$^\S$ & ---    & ---  & 100\% & \textbf{0\%} & \textbf{0\%} \\
RC-3 (partial)                & ---    & P3 & 100\% & Warn only & --- \\
RC-6 (semantic, out of scope) & ---    & ---  & varies & N/A & --- \\
\bottomrule
\end{tabular}

\par\vspace{2pt}
{\footnotesize\raggedright
*P2 uses a real \texttt{did:web} resolver (live resolution with revocation and key
rotation, Experiment~E); without any resolver P2 logs and allows (Section~\ref{sec:pcat-limits}).
\quad $^\dagger$Concurrent budget test: 1/3 succeed, 2/3 blocked correctly.
\quad $^\ddagger$P3 warns; full fix requires the platform to move the secret to an
Authorization header.
\quad $^\S$P5 covers HTTP/SSE transport; stdio deployments require host-level isolation.\par}
\end{table*}

\textbf{P1 evaluation.}
Against the rogue marketplace, \pcat{} blocked the unsigned response (V5 attack)
and passed the signed legitimate response, using an ECDSA P-256 key pair
generated at runtime.

\textbf{P3 evaluation.}
For V10 (CSRF), a cross-origin POST without a token received HTTP~403 while a
same-origin POST with a valid token received HTTP~200. For V11, three encoded
variants of \texttt{to=//attacker.evil.com} (plain, URL-encoded, backslash) were
all blocked and the safe \texttt{to=/ui/console/} passed.

\textbf{P4 evaluation.}
Three concurrent requests deducting 60 units from a 100-unit budget yield exactly
one success and two rejections, with zero false blocks on 1{,}000 sequential
deductions. As a single gate over a multi-worker backend, \pcat{} serializes
redemption and drops Experiment~B's 30--57\% double-spend to 0\% (Experiment~C).
Four \pcat{} instances over a shared compare-and-swap store (SQLite's atomic
\texttt{UPDATE\ldots WHERE used=0}, the Redis+Lua primitive) preserve this at
scale (Experiment~D, zero double-spends over 200 trials), detailed in
 Supplementary~\ref{supp:bench}.

\textbf{P5 evaluation.}
An unauthenticated \texttt{tools/call} on \texttt{complete\_checkout} (A-AP2-11)
and a request from an unregistered caller (A-AP2-15 role confusion) both received
HTTP~403, while a request carrying the pre-registered identity token passed to
the backend.

\textbf{Latency microbenchmarks.}
We measured \pcat{}'s processing overhead with 3{,}000 consecutive in-process
calls per scenario using \texttt{time.perf\_counter()} on a consumer laptop
(Intel i7-13700H, 32\,GB RAM, Python~3.11), exclusive of network I/O
(Table~\ref{tab:pcat-perf}).
P1 (ECDSA~P-256 verification) is the bottleneck at 0.21\,ms,
consistent with known P-256 performance on x86-64.
P5 (one \texttt{json.loads} plus a set-membership lookup) adds 0.08\,ms.
Both are negligible relative to agent-to-service network round-trip times.

\begin{table}[!ht]
\caption{\pcat{} latency microbenchmarks (3{,}000 calls per scenario,
in-process, exclusive of network I/O).}
\label{tab:pcat-perf}
\centering
\small
\renewcommand{\arraystretch}{1.1}
\resizebox{\columnwidth}{!}{\begin{tabular}{@{}lrrr@{}}
\toprule
\textbf{Scenario} & \textbf{Median} & \textbf{p99} & \textbf{Throughput} \\
\midrule
Benign request (P3+P4+P5 pipeline) & 0.015\,ms & 0.030\,ms & 64{,}500\,req/s \\
P1 response verify (ECDSA P-256)   & 0.209\,ms & 0.414\,ms &  4{,}800\,req/s \\
P5 MCP tool-call auth              & 0.083\,ms & 0.184\,ms & 12{,}000\,req/s \\
\bottomrule
\end{tabular}}
\end{table}

\textbf{Concurrent throughput.}
We evaluated \pcat{} under concurrent agent load by running $K$ asyncio
workers simultaneously, each sending 500~requests, for $K \in \{1,10,50,100\}$
(Table~\ref{tab:pcat-concurrent}). Aggregate throughput stays at
5{,}100--5{,}500\,req/s across all concurrency levels, with median per-request
latency at most 0.12\,ms, and memory scales at about 50\,KB per connection
(5\,MB for 100~agents). The P4 \texttt{asyncio.Lock} does not contend on standard
requests because payment budget operations route separately, and the
multi-process limitation (L2) applies only to deployments spanning multiple
Python processes.

\begin{table}[!ht]
\caption{\pcat{} concurrent throughput under $K$ simultaneous agent connections
(500 requests per agent, asyncio gather, Intel i7-13700H).}
\label{tab:pcat-concurrent}
\centering
\small
\renewcommand{\arraystretch}{1.1}
\begin{tabular}{@{}rrrrr@{}}
\toprule
\textbf{$K$ agents} & \textbf{Total req} & \textbf{Req/s} &
\textbf{Median} & \textbf{p99} \\
\midrule
  1 &    500 & 5{,}105 & 0.120\,ms & 0.195\,ms \\
 10 &  5{,}000 & 5{,}230 & 0.108\,ms & 0.659\,ms \\
 50 & 25{,}000 & 5{,}541 & 0.109\,ms & 0.282\,ms \\
100 & 50{,}000 & 5{,}186 & 0.114\,ms & 0.281\,ms \\
\bottomrule
\end{tabular}
\end{table}

\textbf{False-positive validation.}
\pcat{} passed all 10{,}000 diverse benign requests with zero false positives
(0\%, 95\% Wilson CI: [0\%, 0.04\%]), 1{,}000/1{,}000 in each of ten categories:
GET, POST with a CSRF token, POST without one, text/plain bodies, JSON bodies,
authorized MCP tool calls, non-sensitive MCP tool calls, \texttt{localhost} over
IPv4, \texttt{localhost} over IPv6, and cookieless PUT.

\subsection{Documented Limitations}
\label{sec:pcat-limits}

\pcat{} has three implementation-specific limitations:
\begin{itemize}
\item \textbf{L1 (P2 needs an operated resolver and trust root):}
Caller identity binding (P2) runs against a real \texttt{did:web} resolver and is
evaluated end-to-end, including revocation and key rotation (Experiment~E, 0\%
\asr{} for V9 and A-AP2-5). It still depends on the deployment operating a
resolver and trusting the DID method's root. A minimal deployment without one
logs and allows, leaving RC-2 uncovered.
Theorem~\ref{thm:p1} does not depend on P2.

\item \textbf{L2 (P4 distributed atomicity):}
Theorem~\ref{thm:p4} is proved for single-process deployment, where
\texttt{asyncio.Lock} provides atomicity. Across multiple workers or \pcat{}
instances it reduces to the linearizability of the shared compare-and-swap store:
with an atomic conditional update (SQLite's \texttt{UPDATE\ldots WHERE used=0}, or
a Redis+Lua script), a token's redemptions are totally ordered and exactly one
observes \texttt{used=0}, so consume-once holds for any number of instances
(confirmed in Experiment~D, zero double-spends over 200 trials).

\item \textbf{L3 (V4 partial):}
\pcat{} cannot change how \coralos{} constructs URL paths, so
it can only detect and warn when a credential appears in the URL.
Full remediation of V4 requires the platform to move agentSecret
to an \texttt{Authorization: Bearer} header.
\end{itemize}

Beyond these implementation gaps, fourteen vulnerabilities in the study
inhabit layers outside the agent-to-platform HTTP path.
V6 and V13 reside in the internal \coralos{}-to-LLM-provider call path,
not the agent-facing HTTP path where \pcat{} sits.
V14 is a filesystem race in the registry file loader.
V15 involves WebSocket event routing.
F-3 is a timing configuration error, F-4 and F-8 are key-management bugs,
and F-7 is an agent-protocol sender field.
A-AP2-4, A-AP2-7, A-AP2-8, A-AP2-10, A-AP2-12, and A-AP2-13 reside
in AP2's internal certificate, filesystem, nonce, logging, argument validation,
and token-revocation logic respectively.
These vulnerabilities are not gaps in \pcat{}'s design.
They require countermeasures at their own layers:
model-parameter validation within the LLM proxy,
filesystem locking, WebSocket-aware middleware, key storage encryption,
and certificate configuration enforcement.
All are well-understood remediations. None require a protocol-layer defense.
Two could be packaged as \pcat{} companion modules, namely response signing
at the LLM-proxy boundary (V6/V13) and WebSocket-aware authorization (V15),
which we leave to future work.

\section{LLM Behavioral Impact Study for V5}
\label{sec:llm-impact}

\subsection{Separating Delivery from Behavioral Impact}

V5 has two independent components.
The \emph{delivery} component (RC-1) is structural: \coralos{} fetches
and trusts the poisoned agent definition regardless of model.
We proved this live in Section~\ref{sec:v5} (100\% \asr{}).

The \emph{behavioral impact} component (RC-6) is semantic:
it requires an LLM-powered orchestrator to process the IPI instruction
embedded in the agent description and act on it.
This component is model-dependent per Definition~\ref{def:semantic}.
We study it independently here to quantify which model configurations
remain at risk.

\subsection{Experimental Setup}

We tested 8 models from 5 providers across 3 temperatures (0.0, 0.3, 0.7)
and 3 IPI payload variants: a compliance framing, a technical override framing,
and an authority claim framing.
Each combination received 20 trials, totaling 1{,}440 API calls
(8 $\times$ 3 $\times$ 3 $\times$ 20).
The judge is deterministic: $\asr{} = 1$ iff the attacker address is the
agent's selected \texttt{WALLET\_USED} value, not merely quoted in a refusal
(as \texttt{claude-3-haiku} does).
Results use 95\% Wilson confidence intervals.
Total cost: \$0.72.
All calls use OpenRouter with pinned model IDs, and direct-vendor replication
of a subset is future work.
The exact system prompts, IPI payloads, and orchestration scaffolding
are released at \benchmarkcode{}.
Table~\ref{tab:llm-cost} shows the per-model cost breakdown.
The variation reflects pricing differences across providers
(GPT-4o and frontier-aligned models dominate at \$0.14--\$0.31
per 180 trials due to higher per-token rates, while
Flash-tier and open-weight models cost under \$0.03).

\begin{table}[!ht]
\caption{Per-model cost breakdown for the 1{,}440-trial LLM study
(180 trials per model: 3 temperatures $\times$ 3 IPI variants $\times$ 20 trials).
All calls via OpenRouter.}
\label{tab:llm-cost}
\centering
\small
\renewcommand{\arraystretch}{1.1}
\begin{tabular}{@{}lrr@{}}
\toprule
\textbf{Model} & \textbf{Trials} & \textbf{Cost (USD)} \\
\midrule
\texttt{gpt-4o-mini}            & 180 & \$0.0097 \\
\texttt{gpt-4o}                 & 180 & \$0.1394 \\
\texttt{claude-3-haiku}         & 180 & \$0.0262 \\
\texttt{claude-sonnet-4.6}      & 180 & \$0.3145 \\
\texttt{gemini-2.5-flash}       & 180 & \$0.0293 \\
\texttt{gemini-2.5-pro}         & 180 & \$0.1746 \\
\texttt{llama-3.3-70b-instruct} & 180 & \$0.0067 \\
\texttt{mistral-large-2512}     & 180 & \$0.0194 \\
\midrule
\textbf{Total}                  & \textbf{1{,}440} & \textbf{\$0.7198} \\
\bottomrule
\end{tabular}
\end{table}

\subsection{Results}

\begin{table*}[!t]
\caption{V5 behavioral impact \asr{} across models
(20 trials per model-temperature-variant configuration,
8 models $\times$ 3 temperatures $\times$ 3 IPI variants $\times$ 20 trials = 1{,}440 total,
95\% Wilson CI, cost \$0.72).
The pattern is a gradient, not a binary split.}
\label{tab:llm}
\centering
\small
\renewcommand{\arraystretch}{1.1}
\begin{tabular}{@{}p{3.8cm}p{1.8cm}p{1.5cm}cc@{}}
\toprule
\textbf{Model} & \textbf{Provider} & \textbf{Tier} &
\textbf{\asr{}} & \textbf{95\% CI} \\
\midrule
\texttt{gpt-4o-mini}       & OpenAI    & Flash    & 100\% & [98\%, 100\%] \\
\texttt{gemini-2.5-flash}  & Google    & Flash    & 100\% & [98\%, 100\%] \\
\texttt{mistral-large-2512}& Mistral   & Large    & 99\%  & [97\%, 100\%] \\
\texttt{gpt-4o}            & OpenAI    & Mid      & 68\%  & [61\%, 74\%]  \\
\texttt{claude-3-haiku}    & Anthropic & Flash    & 0\%   & [0\%, 2\%]    \\
\midrule
\texttt{llama-3.3-70b}     & Meta      & OSS      & 10\%  & [6\%, 15\%]   \\
\texttt{claude-sonnet-4.6} & Anthropic & Frontier & 0\%   & [0\%, 2\%]    \\
\texttt{gemini-2.5-pro}    & Google    & Frontier & 0\%   & [0\%, 2\%]    \\
\bottomrule
\end{tabular}
\end{table*}

Table~\ref{tab:llm} shows a gradient rather than a binary split. Three models
are fully vulnerable regardless of temperature or payload variant (GPT-4o-mini,
Gemini~2.5~Flash, and Mistral-large, each 99--100\% \asr{}), while GPT-4o sits at
68\% ([61\%, 74\%]) stably across temperatures, suggesting a structural
compliance bias rather than temperature sensitivity. Claude~3-Haiku, a Flash-tier
model, nonetheless resists in every trial (0\%) unlike the other Flash-tier
models, so alignment training rather than tier drives resistance, and
Claude~Sonnet~4.6 and Gemini~2.5~Pro also resist fully. Llama-3.3-70B is
temperature-dependent (10\% at 0.0, 20\% at 0.3, 0\% at 0.7). All three IPI
variants gave similar aggregate \asr{} (48--56\%), so the attack does not depend
on a single framing.

\subsection{Implications}

Four implications follow. First, V5's delivery (RC-1) is model-independent: the
poisoned definition reaches \coralos{} regardless of model, so any deployment is
at some risk. Second, susceptibility does not track model size, as GPT-4o (68\%)
is far more susceptible than the smaller Llama-3.3-70B (10\%), so alignment
training, not scale, is decisive. Third, resistance is not universal: GPT-4o and
Llama leave a nonzero exploitable fraction that is operationally significant for
high-value transactions. Fourth, resistant models refuse for distinguishable
reasons: injection detection for Sonnet, fraud for Llama, and policy refusal for
GPT-4o  (Supplementary~\ref{supp:refusal}). This favors semantic defenses that surface the
payment destination and sanitize description fields over model self-diagnosis.

\section{Discussion}
\label{sec:discussion}

Taken together, the findings are systemic. The recurrence of the same five
root causes (RC-1 through RC-5) across three
independently-developed platforms from three different organizations
indicates these are not isolated bugs but architectural failure modes
intrinsic to current agentic commerce systems.
Two platforms fetch agent definitions from centralized registries without
signature verification (RC-1), all three placed credentials in observable
channels (RC-3), and two implemented non-atomic payment state (RC-4).
The shared cause is design economics: centralized discovery, URL-based
authentication, and in-memory state are the paths of least resistance, so
developers converge on the same unsafe defaults. The highest-impact corrections
recur across platforms: sign registry and service responses, bind payment
destinations to verified identities, and make payment-state transitions atomic.
\pcat{} addresses all five root causes with one external layer because they all
sit at the protocol boundary a sidecar can mediate.

The 100\% \asr{} figures hold under default deployments (RC-4 under adversarial
scheduling, Remark~\ref{rem:rc4}). Common hardening narrows individual findings
but not the root causes: non-TRACE logging removes V4's leak, strict same-origin
CORS blocks V10, and mandatory mutual TLS closes development-only HTTP, yet none
verify registry or federation content (RC-1, RC-2) or make payment state
atomic (RC-4).

We deliberately exclude one candidate finding, A-AP2-9: the \aptwo{} reference
implementation uses plain-HTTP inter-service URLs (e.g.,
\texttt{http://localhost:8003}), but all target \texttt{localhost} and are
documented as development-only, and production deployments would use HTTPS with
mutual TLS. We do recommend the \aptwo{} specification \emph{mandate} mutual TLS
for inter-service communication so developers do not inadvertently ship HTTP.

The live V9 proof (Section~\ref{sec:v9}) stops at the protocol level rather than
an on-chain transfer: we generated real Solana Devnet keypairs for both parties
but could not settle on-chain due to faucet rate limiting. As in standard
payment-security practice (showing an injection payload reaches the query, not
that data was exfiltrated), the protocol-level proof suffices.

\subsection{Responsible Disclosure}

We followed Coordinated Vulnerability Disclosure, notifying the \coralos{}
(hello@coralos.ai), \fetchai{} (security@fetch.ai), and Google AP2 (Google Open
Source Software Vulnerability Reward Program, OSS VRP) security teams
simultaneously on 2026-07-06, with a 90-day embargo ending 2026-10-04, when the
full artifacts become public. Google's OSS VRP declined the \aptwo{} reports as
OT2/OT3 tier (GitHub Security Advisories GHSA-w7vq-wj2c-8h73, GHSA-hvcr-7w52-jmx5,
and GHSA-q7f7-h63m-9gmp filed independently), while the \coralos{} and \fetchai{}
responses remain pending.

To protect deployed systems during the remediation window, this public
preprint deliberately withholds exact code locations and proof-of-concept
exploits. The complete technical details, proof-of-concept code, and the
\aipbench{} artifacts are released in the project repository once the
disclosure period concludes.

No experiment moved real funds (payment proofs used throwaway Solana Devnet
keypairs), and every finding was disclosed before any detail was released.

\subsection{Limitations}

Our study has four limitations.
First, we study three platforms. Further work should test the taxonomy on
different foundations such as on-chain-only or fully-encrypted registries.
Second, the V5 LLM study uses 1{,}440 trials across 8 models. Some
models show non-trivial temperature dependence, and further prompt
variants could map the boundaries of susceptibility more precisely.
Third, \pcat{}'s P4 has a formal proof only for single-process deployment. The
distributed case (four instances over a shared compare-and-swap store) is
validated empirically (Experiment~D) but not proved.
Fourth, self-audit during development found and fixed four \pcat{}
implementation vulnerabilities before release (BUG-AUDIT-1 through -4, in
\pcatcode{}). An independent external review before deployment is still
recommended. Despite these limitations, \pcat{} eliminates most structural
vulnerabilities that cross the HTTP boundary with no platform change, and the
remaining gaps are well-scoped rather than fundamental.

\section{Conclusion}
\label{sec:conclusion}

We presented the first systematic, cross-platform security analysis of agentic
commerce infrastructure. Across \coralos{}, \fetchai{}, and \aptwo{} we identified
33 structural vulnerabilities, each succeeding deterministically and independent
of the deployed model (100\% attack-success rate wherever live-measured), and
chained three into an end-to-end payment hijack.

The central insight is a formal distinction between \emph{structural attacks}
(protocol-level, deterministic, model-independent) and \emph{semantic attacks}
(reasoning-level, probabilistic, model-dependent). No model improvement fixes a
structural vulnerability, and no protocol patch prevents a semantic attack, so
defense must address both layers: the most widely-deployed models stay
susceptible while frontier-aligned models show 0\%, yet the structural delivery
path bypasses all model tiers equally.

\aipbench{} and \pcat{} turn this into tools: a deterministic benchmark and a
unified defense that drives structural \asr{} to 0\% for four of the five
structural classes (RC-3 warn-only), giving operators concrete means to measure
and improve agentic commerce security.

  \bibliographystyle{plain}

\bibliography{references}

\appendix

\section{Consolidated Findings by Root-Cause Class}
\label{supp:consolidated}
Table~\ref{tab:consolidated} lists all 33 structural findings grouped by
root-cause class, with platform, the constituent IDs, the proof method, and,
for \coralos{}, persistence in v1.4.0.

\begin{table}[!ht]
\caption{All 33 structural findings by root-cause class. Proof: L\,=\,live,
C\,=\,code proof, R\,=\,deterministic reproduction. All ten persisting \coralos{}
findings have byte-identical vulnerable code in v1.4.0 (V9 also re-run live); V7
was remediated in v1.4.0.}
\label{tab:consolidated}
\centering
\small
\renewcommand{\arraystretch}{1.15}
\begin{tabular}{@{}llp{4.2cm}c@{}}
\toprule
\textbf{Class} & \textbf{\#} & \textbf{Findings (platform)} & \textbf{Proof} \\
\midrule
RC-1 & 3  & V5 (\coralos{}), F-1/F-2 (\fetchai{}) & L \\
RC-2 & 3  & V9 (\coralos{}), F-8 (\fetchai{}), A-AP2-5 & L/C/R \\
RC-3 & 4  & V4 (\coralos{}), F-3/F-4 (\fetchai{}), A-AP2-10 & L/C \\
RC-4 & 9  & V7$^\dagger$/V14 (\coralos{}), F-6 (\fetchai{}), A-AP2-1/2/3/6/8/14 & C \\
RC-5 & 13 & V6/V10/V11/V13/V15 (\coralos{}), F-5/F-7 (\fetchai{}), A-AP2-4/7/11/12/13/15 & L/C \\
CHAIN & 1 & V5$\to$V4$\to$V9 (\coralos{}, composite) & L \\
\bottomrule
\multicolumn{4}{l}{\small $^\dagger$V7 (x402 budget race) was remediated in \coralos{} v1.4.0.}
\end{tabular}
\end{table}

\section{Experimental Environment}
\label{supp:env}

\paragraph{\coralos{}.}
The archived CoralOS result files record runs against commit~\texttt{a38912b}
(v1.3.0), containerized as Docker image \texttt{coral-server:research-v5}.
Diffing the vulnerable sources against v1.4.0 (commit~\texttt{68b8326}) shows ten
of the eleven findings byte-identical, so they persist; V9 is additionally
re-proven live by exercising the production \texttt{getWallet()} compiled from the
v1.4.0 source (Section~\ref{sec:v9}). Only V7 (x402 budget race) was remediated in
v1.4.0, where the check-and-deduct now runs inside a \texttt{Mutex.withLock}
(new \texttt{SessionRunningBudget}).
Configuration: \texttt{auth.keys=["research-key-2026"]},
\texttt{registry.includeDebugAgents=true},
\texttt{llmProxy.providers=[openrouter with gpt-3.5-turbo]}.
Source integrity manifest (251~Kotlin files):
SHA-256 \texttt{6FCB7860B9F60299\allowbreak 318ACBE4C329924C\allowbreak 66180360D1B3FB9D\allowbreak E627537DAEB748C7}.
Full PoC scripts at \coralospoc{}.

\paragraph{\fetchai{}.}
\texttt{uagents} Python package, latest release as of June~2026
(\texttt{pip install uagents}).
The \texttt{AlmanacApiResolver} class is used directly in the F-1 proof.
Full PoC scripts at \fetchaipoc{}.

\paragraph{\aptwo{}.}
Reference implementation cloned from
\texttt{github.com\allowbreak/google-agentic-commerce\allowbreak/AP2} (\texttt{--depth=20},
commit as of June~2026).
Analysis targets \texttt{code/samples/python/src/}.
Full code analysis scripts at \aptwopc{}.

\paragraph{LLM API.}
All model calls via OpenRouter (\texttt{openrouter.ai/api/v1}).
Total LLM study cost: \$0.72
(1{,}440~calls: 8~models $\times$ 3~temperatures $\times$ 3~IPI variants $\times$ 20~trials each).

\section{Additional \coralos{} Vulnerability Details}
\label{supp:coralos}

\subsection*{V6 + V13: LLM Proxy Model Override (Both Paths)}

The configured model is \texttt{openai/gpt-3.5-turbo}.
For non-streaming requests (\emph{the LLM proxy service}),
the request body is forwarded unchanged (\texttt{else requestBody}):
the attacker sends \texttt{"model":"openai/gpt-4o"}, and the response
contains \texttt{"model":"openai/gpt-4o"} at cost \$0.00014
(10.4$\times$ the configured model's \$0.0000135).
For streaming requests (\emph{the streaming proxy strategy}),
\texttt{prepareStreamingRequest()} adds \texttt{stream\_options}
but does not rewrite the \texttt{model} field.
The gpt-4o streaming response was confirmed via OpenRouter
(cost \$0.000065).
Result hashes: \texttt{a417b861\ldots} (V6), \texttt{v13\_PROVEN\ldots}.

\subsection*{V10: CSRF Proof Scope}
V10 proves that the server's CORS configuration permits credentialed
cross-origin requests: Ktor echoes the request \texttt{Origin} in
\texttt{Access-Control-Allow-Origin} when \texttt{anyHost()} is set,
enabling browsers to include cookies per the CORS specification.
The Python PoC demonstrates the server-side policy gap.

Its browser-side vector is a cross-origin credentialed POST issued from
an attacker-controlled page, which the browser accompanies with the victim's
session cookie.

\redactbox

Cross-origin POST with victim cookie returned HTTP~200,
namespace \texttt{csrf\allowbreak-backdoor\allowbreak-namespace} created
(sessionId: \texttt{de195366\ldots}).

\section{Additional \fetchai{} Vulnerability Details}
\label{supp:fetchai}

\begin{table}[!ht]
\caption{\fetchai{} uAgents: 8 structural vulnerabilities}
\label{tab:fetchai}
\centering
\small
\renewcommand{\arraystretch}{1.1}
\begin{tabular}{@{}p{0.5cm}p{3.5cm}p{1.05cm}p{0.85cm}p{1.15cm}@{}}
\toprule
\textbf{ID} & \textbf{Title} & \textbf{Class} & \textbf{Sev.} & \textbf{Proof} \\
\midrule
F-1 & Endpoint hijacking via Almanac API & RC-1 & HIGH & Live \\
F-2 & SSRF via F-1 endpoint injection    & RC-1 & HIGH & Via F-1 \\
F-3 & Attestation token: 1000\,s lifetime bug & RC-3 & MED & Code \\
F-4 & Private keys in plaintext JSON     & RC-3 & HIGH & Code \\
F-5 & IPv6 localhost bypass (CWE-284)    & RC-5 & MED  & Code \\
F-6 & No body size limit on \texttt{/submit} & RC-4 & MED & Code \\
F-7 & Sync query pollution via forged sender & RC-5 & MED & Code \\
F-8 & Wallet key mismatch on restart     & RC-2 & HIGH & Repro. \\
\bottomrule
\end{tabular}
\end{table}

\subsection*{F-3: Attestation Token Lifetime Bug}
\redactbox

\subsection*{F-8: Wallet Key Mismatch (Deterministic Reproduction)}
\redactbox

\section{Additional \aptwo{} Vulnerability Details}
\label{supp:ap2}

\textbf{A-AP2-1 (Token Redemption TOCTOU, RC-4).}
\texttt{complete\_checkout()} in
\emph{the checkout completion path} follows the
pattern: check \texttt{token\_data.get('used')},
execute the payment,
then set \texttt{token\_data['used'] = True}.
This check-then-act gap spans roughly 70~lines and reads and writes a
file-based token store with no locking.
The window contains no \texttt{await}, so a single asyncio worker is safe,
but two worker processes both pass the \texttt{used=False} check before
either writes back, enabling double-spend (measured in Experiment~B).

\textbf{A-AP2-4 (Empty \texttt{trusted\_roots} Bypasses Cert Validation, RC-5).}
When the trusted root certificate file is absent,
\emph{the credentials provider} sets \texttt{trusted\_roots = []}.
The subsequent guard \texttt{if self.\_trusted\_roots:}
(\emph{the certificate-chain verifier}) evaluates to \texttt{False} for an empty list
(Python falsy semantics), skipping the entire certificate chain
verification block.
A mandate signed with any self-signed certificate is therefore accepted,
completely undermining the cryptographic mandate chain.

\textbf{A-AP2-11 (No Authentication on Merchant MCP Endpoints, RC-5).}
All five merchant operations (\texttt{search\_inventory},
\texttt{check\_product}, \texttt{assemble\_cart},
\texttt{create\_checkout}, and \texttt{complete\_checkout}) are
annotated only with \texttt{@mcp.tool()}.
No authentication decorator, no caller verification,
no API key check.
Any network-reachable party can invoke \texttt{complete\_checkout(
payment\_token, checkout\_mandate\_id)} directly,
bypassing the mandate chain entirely.

\textbf{A-AP2-15 (Role Confusion: Credentials Provider, RC-5).}
\texttt{issue\_payment\_credential()} in
\emph{the credential issuance tool}
verifies that a valid mandate SD-JWT is present but does not
authenticate the \emph{caller's identity}.
Any agent, not only the authorized payment processor, can present
a valid mandate and receive a payment credential.
The Go reference implementation, by contrast, performs an explicit
caller whitelist check (\emph{the Go reference executor}),
highlighting that this is a Python implementation oversight,
not an \aptwo{} specification gap.

\subsection*{A-AP2-4: Empty \texttt{trusted\_roots} (Python Falsy Evaluation)}
\redactbox

\subsection*{A-AP2-11: No Authentication on Merchant MCP}
\redactbox

\section{Additional Cross-Platform Attack Chains}
\label{supp:chains}

Beyond the \coralos{} kill chain (V5$\to$V4$\to$V9), two further chains
compose vulnerabilities across a single platform.
In \fetchai{}, F-1 and F-2 compose into server-side request forgery: the rogue
Almanac returns an internal URL as Agent~B's endpoint, and
\texttt{AlmanacApiResolver} makes the agent POST to it from the server,
reaching cloud metadata services (e.g., \texttt{169.254.169.254}) or internal
network endpoints inaccessible from outside.
In \aptwo{}, the file token store race (A-AP2-2) and the token-redemption
TOCTOU (A-AP2-1) compose into a double-spend under multi-worker deployment,
where two workers both observe \texttt{used=False} before either writes back.

\section{\pcat{} Implementation Notes}
\label{supp:pcat}

The P1 signing payload uses JSON to avoid delimiter-collision attacks:
\begin{lstlisting}[language=Python,
  caption={pcat\_proxy\_v3.py, JSON signing payload for P1.}]
def _build_signing_payload(body: bytes, ts: str, nonce: str) -> bytes:
    return json.dumps({
        "ts": ts, "nonce": nonce,
        "body_sha256": sha256(body).hexdigest(),
        "version": "pcat-sig-v1"
    }, sort_keys=True).encode()
\end{lstlisting}

\pcat{} underwent three rounds of internal security review,
identifying and fixing 10+ bugs including:
CSRF origin substring bypass (fixed to use \texttt{urlparse().hostname}),
P1 nonce replay (fixed with TTL heap cache),
P4 unbounded history growth (fixed with \\ \texttt{deque(maxlen=10000)}),
and P1 alert-mode fallthrough (fixed to block by default for registered services).
Source at \pcatcode{}.

\section{\aipbench{} Reproducibility}
\label{supp:bench}

\paragraph{Experiment~A (cross-platform \pcat{} enforcement).}
Using the reference \pcat{} implementation, the unauthenticated \aptwo{}
\texttt{complete\_checkout} call is rejected by P5, an unknown and an unsigned
\fetchai{} Almanac response are both rejected by P1, a correctly signed
response passes, and two concurrent single-use redemptions yield exactly one
success under P4 (result hash \texttt{e67be3f8\ldots}).

\paragraph{Experiment~B (RC-4 double-spend win-rate, 200 trials each).}
A faithful reproduction of the file-based token store measures the double-spend
rate by deployment model: single-process asyncio 0\%, two processes 30\%
(0.1\,ms verify window) to 57\% (2\,ms) under natural jitter, and 100\% under
synchronized start (result hash \texttt{f3c7bd50\ldots}).

\paragraph{Experiment~C (multi-worker P4 validation, 200 trials).}
Routing the two concurrent redemptions of Experiment~B through a single
\pcat{} gate (the real \texttt{AtomicPaymentStore}) admits exactly one
redemption in every trial, so the double-spend rate is 0\% even though the
backend is multi-worker (result hash \texttt{d41b22a5\ldots}).

\paragraph{Experiment~D (distributed P4, four \pcat{} instances, 200 trials).}
Four \pcat{} instances share one compare-and-swap store (SQLite's
\texttt{UPDATE\ldots WHERE used=0}, the primitive of Redis+Lua). Each trial's
four concurrent redemptions yield exactly one success and three
contention-rejections, with zero double-spends across all trials, validating
P4 for horizontally scaled deployments (result hash \texttt{9e190769\ldots}).

\paragraph{Experiment~E (P2 caller identity binding over a real \texttt{did:web} resolver).}
\pcat{}'s P2 resolves the caller's DID by a live HTTP fetch of the
controller-hosted DID document and verifies the caller's ECDSA proof against the
document's key. Over the real resolver, P2 admits the authorized wallet and
blocks all six RC-2 variants: three V9 cases (no DID, an unknown DID, a forged
proof), an A-AP2-5 wallet not authorized for its DID, a \emph{revoked} DID, and a
\emph{rotated}-key proof, at 0\% \asr{} (result hash \texttt{9ad9fb23\ldots}).

\section{Refusal Rationales in the LLM Study}
\label{supp:refusal}

We categorized the per-trial responses of the resistant models (logged in the
1{,}440-trial study, with additional full-length captures at cost \$0.12).
The attack-success rates are those of the main study, and the categories below
describe \emph{how} each model resists.

\begin{table}[!ht]
\caption{Refusal rationale by model (V5 marketplace IPI).}
\label{tab:refusal}
\centering
\small
\renewcommand{\arraystretch}{1.15}
\begin{tabular}{@{}p{2.3cm}p{2.1cm}p{2.7cm}@{}}
\toprule
\textbf{Model} & \textbf{Rationale} & \textbf{Representative text} \\
\midrule
Claude Sonnet 4.6 & Explicit injection detection & ``prompt injection / social engineering attack'' \\
Claude 3 Haiku & Flags altered recipient & ``updated to an AttackerWallet\ldots raises serious concerns'' \\
Llama 3.3 70B & Fraud refusal & ``cannot facilitate a potentially fraudulent transaction'' \\
GPT-4o & Bare policy refusal & ``I'm sorry, I can't assist with that'' \\
Gemini 2.5 Pro & Silent non-adoption & does not adopt the injected address, no surfaced rationale \\
Flash-tier & None (complies) & follows the injected address \\
\bottomrule
\end{tabular}
\end{table}

The resistant models treat the description field as untrusted and decline to
act on an unverified payment destination. This supports semantic defenses that
surface the payment destination for explicit user confirmation and sanitize
marketplace description fields, rather than relying on model self-diagnosis:
GPT-4o refuses without diagnosing the injection, and Flash-tier models do not
refuse at all.

\begin{itemize}
\item \textbf{Docker Compose} (\texttt{docker-compose.yml}):
  one command starts \coralos{}, rogue marketplace, rogue federation,
  and \pcat{} for full attack reproduction.
\item \textbf{PoC scripts} (\texttt{03\_attacks/}):
  Python 3.10+ scripts with \texttt{requirements.txt}
  (requests, aiohttp, websockets, cryptography, solana, solders).
\item \textbf{Result files} (\texttt{03\_attacks/results/}):
  JSON with embedded SHA-256 integrity hashes.
\item \textbf{Source manifest} (\texttt{SOURCE\_INTEGRITY.txt}):
  SHA-256 of all 251 \coralos{} Kotlin source files,
  manifest hash \texttt{6FCB7860\ldots}.
\end{itemize}

Code repository: \githubartifacts{}. Benchmark dataset: \hfbenchmark{}.

\end{document}